\theoremstyle{remark}\newtheorem{theorem}{\hskip 1em Theorem}
\theoremstyle{remark}\newtheorem{lemma}{\hskip 1em Lemma}
\theoremstyle{remark}\newtheorem{example}{\hskip 1em Example}
\begin{document}

\title{Local unitary equivalence of arbitrary-dimensional multipartite quantum states}

\author{Qing Zhou}
\author{Yi-Zheng Zhen}
\author{Xin-Yu Xu}
\affiliation{Hefei National Research Center for Physical Sciences at the Microscale and School of Physical Sciences, University of Science and Technology of China, Hefei 230026, China}
\affiliation{CAS Center for Excellence in Quantum Information and Quantum Physics, University of Science and Technology of China, Hefei 230026, China}
\author{Shuai Zhao}
\affiliation{Hefei National Research Center for Physical Sciences at the Microscale and School of Physical Sciences, University of Science and Technology of China, Hefei 230026, China}
\affiliation{CAS Center for Excellence in Quantum Information and Quantum Physics, University of Science and Technology of China, Hefei 230026, China}
\affiliation{School of Cyberspace, Hangzhou Dianzi University, Hangzhou 310018, China}
\author{Wen-Li Yang}
\email{wlyang@nwu.edu.cn }
\affiliation{Institute of Modern Physics, Northwest University, Xi'an 710069, China}
\author{Shao-Ming Fei}
\email{feishm@cnu.edu.cn}
\affiliation{School of Mathematical Sciences, Capital Normal University, Beijing 100048, China}
\author{Li Li}
\email{eidos@ustc.edu.cn}
\author{Nai-Le Liu}
\email{nlliu@ustc.edu.cn}
\author{Kai Chen}
\email{kaichen@ustc.edu.cn}
\affiliation{Hefei National Research Center for Physical Sciences at the Microscale and School of Physical Sciences, University of Science and Technology of China, Hefei 230026, China}
\affiliation{CAS Center for Excellence in Quantum Information and Quantum Physics, University of Science and Technology of China, Hefei 230026, China}
\affiliation{Hefei National Laboratory, University of Science and Technology of China, Hefei 230088, China}
\date{Feb 21, 2024}

\begin{abstract}
Local unitary equivalence is an important ingredient for quantifying and classifying entanglement. 
Verifying whether or not two quantum states are local unitary equivalent is a crucial problem, where only the case of multipartite pure states is solved. For mixed states, however, the verification of local unitary equivalence is still a challenging problem. 
In this paper, based on the coefficient matrices of generalized Bloch representations of quantum states, we find a variety of local unitary invariants for arbitrary-dimensional bipartite quantum states. These invariants are operational and can be used as necessary conditions for verifying the local unitary equivalence of two quantum states. Furthermore, we extend the construction to the arbitrary-dimensional multipartite case. 
We finally apply these invariants to estimate concurrence, a vital entanglement measure, showing the practicability of local unitary invariants in characterizing entanglement.
\end{abstract}


\maketitle

\section{Introduction}\label{sec:introduction}
Entanglement as an important quantum resource plays a crucial role in quantum information technologies \cite{nielsen2010quantum}, such as quantum teleportation \cite{bennett1993teleporting,bouwmeester1997experimental}, quantum secret sharing \cite{RevModPhys.81.865}, quantum cryptography \cite{ekert1991quantum} and quantum repeaters \cite{briegel1998quantum}. However, the quantification and classification of entanglement remain a challenging problem. The local unitary (LU) equivalence of the quantum state is an effective ingredient to characterize entanglement, the amount of which is invariant under LU transformations \cite{dur2000three}. The LU invariants of a quantum state can thus be adopted to detect entanglement or entanglement classes and to derive, e.g., entanglement measures \cite{PhysRevLett.95.040504,PhysRevLett.95.210501} and separability criteria \cite{CHEN200214,10.5555/2011534.2011535,PhysRevA.67.032312,PhysRevA.68.062313}. If one can completely solve the problem of the LU equivalence class, there will be a major breakthrough in understanding and making good use of entanglement resource. 


In recent years, much effort has been devoted to characterizing and understanding the LU equivalence of quantum states \cite{albeverio2003nonlocal,albeverio2005equivalence, sun2005note, zhang2013criterion, li2014local, albeverio2005multipartite, albeverio2005local, albeverio2007local, zhang2013criterion, martins2015necessary, linden1999nonlocal, albeverio2005equivalence, sun2017local, cui2017local, jing2015local,Li_2013,Jing_2014,Zhang_2015}. If two quantum states are LU equivalent, there must be some invariants under LU transformation. Thus, the basic method to determine whether or not two quantum states are LU equivalent is by checking the LU invariants. Up to now, many novel methods for computing the LU invariants have been developed \cite{PhysRevA.58.1833,817508}. For pure states, an elegant result of the LU equivalence problem for multipartite qubit states has been obtained by Kraus \cite{PhysRevA.82.032121,kraus2010local}. The problem for arbitrary-dimensional pure states case has also been solved in Ref. \cite{liu2012local}. 
However, for mixed cases, only two-qubit states and some special cases have been solved.
For instance, a set of 18 polynomial invariants is proposed to completely determine whether a two-qubit mixed quantum state is LU equivalent to another state \cite{makhlin2002nonlocal}.
As for arbitrary dimensional bipartite quantum states, only a set of LU invariants for nondegenerate density matrices is known \cite{zhou2012local}.
In general, it is extremely difficult to present a sufficient and necessary condition for certifying the LU equivalence of two arbitrary quantum states.

In this paper, we focus on building LU invariants, which are the necessary condition for the LU equivalence for arbitrary-dimensional multipartite mixed quantum states.\ The previous results for LU problems are to calculate LU invariants from the density matrices directly, which are inconvenient to deal with the degeneracy problem. Here, we derive LU invariants based on the singular value decomposition (SVD) of coefficient matrices under the generalized Bloch representations of quantum states. In this manner, the difficulty due to degeneration of the density matrix can be avoided. Moreover, our method for deriving LU invariants of bipartite states can be conveniently extended to arbitrary dimensional multipartite cases. As an example, we show its application in the case of three-qudit states. Furthermore, the concurrence as an entanglement measure \cite{PhysRevA.64.042315,PhysRevLett.95.260502,PhysRevLett.92.167902,PhysRevA.62.032307,PhysRevLett.80.2245,con2002} can be derived from our LU invariants. A quadratic relation is presented here between the LU invariants and the concurrence, which reveals the significance and practicability of LU invariants in exploring entanglement.


The paper is organized as follows. 
In Sec. \ref{sec:twoqudit}, we derive necessary conditions for the LU equivalence of any two bipartite mixed quantum states, from generalized Bloch representations of the density matrices. 
In Sec. \ref{sec:multi}, we extend our methods to the multipartite case and study the LU equivalence for three-qudit states. In Sec. \ref{sec:entanglement}, by analytical calculations, the relation between the LU invariants and the concurrence is explored here for pure bipartite states. Finally, we conclude with our results in Sec. \ref{sec:conclusion}.

\section{Two-qudit case}\label{sec:twoqudit}
In this section, based on the generalized Bloch representations of density matrices of arbitrary dimensional bipartite quantum states \cite{eltschka2020maximum,appel2020monogamy}, we present the necessary conditions for LU equivalence in matrix form and then try to find the operational invariants under LU transformations.

Let ${\cal H}_{1}$ and $ {\cal H}_{2}$ be two complex Hilbert spaces with ${\rm dim}({\cal H}_{1})={\rm dim}({\cal H}_{2})=N$, and $\rho$ be a density matrix of a bipartite quantum mixed state acting on ${\cal H}_{1}\otimes {\cal H}_{2}$. Without loss of generality, $\rho$ can be expressed as:
\begin{equation}
\rho=\frac{1}{N^{2}}I\otimes I+\sum_{i}R_{i}\lambda_{i}^{1}\otimes I+\sum_{j}S_{j}I\otimes \lambda_{j}^{2}+\sum_{ij}T_{ij}\lambda_{i}^{1}\otimes\lambda_{j}^{2},
\label{eq:rho}
\end{equation}
where $T_{ij}=\frac{1}{4}Tr(\rho\lambda_{i}^{1}\otimes\lambda_{j}^{2})$, $S_{j}=\frac{1}{2N}Tr(\rho I\otimes\lambda_{j}^{2})$, $R_{i}=\frac{1}{2N}Tr(\rho\lambda_{i}^{1}\otimes I)$, $i,j=1,2,...,N^{2}-1$. Here, $\lambda_{i}$ are the generators of $SU(N)$ with $Tr(\lambda_{i}\lambda_{j})=2\delta_{ij}$. The $T_{ij},S_{j},R_{i}$ are all real coefficients, as the state $\rho$ and $SU(N)$'s generators are all Hermitian.


The two-qudit states $\rho$ and another two-qudit state $\rho^{\prime}=\frac{1}{N^{2}}I\otimes I+\sum_{i}R_{i}^{\prime}\lambda_{i}^{1}\otimes I+\sum_{j}S_{j}^{\prime}I\otimes \lambda_{j}^{2}+\sum_{ij}T_{ij}^{\prime}\lambda_{i}^{1}\otimes\lambda_{j}^{2}$ 
are called LU equivalent, if there exist unitary operators $U_{1},U_{2}\in SU(N)$, such that 
\begin{equation}
\rho^{\prime}=(U_{1}\otimes U_{2} )\rho(U_{1}\otimes U_{2})^{\dag}.
\label{eq:LU}
\end{equation}
Since the coefficients $\{R_{i},S_{j},T_{ij}\}_{i,j}$ capture complete information of $\rho$, one can investigate the LU equivalence based on the ``feature matrix'' $M(\rho)$ defined as $M(\rho)={\left[ \begin{array}{cc}
1 & S^{t}\\
R & T
\end{array}
\right]}$. Here, the $S=(S_{j})$ and $R=(R_{i})$ are $N^{2}-1$ dimensional column vectors and $T=(T_{ij})$ is a $(N^{2}-1)\times(N^{2}-1)$ matrix. Similarly, one can define $M(\rho^{\prime})={\left[ \begin{array}{cc}
1 & S^{\prime t}\\
R^{\prime} & T^{\prime}
\end{array}
\right]}$. 

It is shown in Ref.$~$\cite{cui2017local} that if two bipartite mixed states $\rho$ and $\rho^{\prime}$ are LU equivalent, then there are $O_{1},O_{2}\in SO(N^{2}-1)$ such that $R^{\prime}=O_{1}^{t}R,S^{\prime}=O_{2}^{t}S,T^{\prime}=O_{1}^{t}TO_{2}$. Then the following Lemma can be presented.

\begin{lemma}
If two bipartite mixed states $\rho$ and $\rho^{\prime}$ are LU equivalent, there are $ O_{1},O_{2}\in SO(N^{2}-1)$, such that $M(\rho^{\prime})={\rm diag}\{1,O_{1}^{t}\}M(\rho){\rm diag}\{1,O_{2}\}$, where $t$ stands for transposition.
\end{lemma}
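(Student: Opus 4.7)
The plan is to read the claim as a block-matrix repackaging of the three coefficient transformation laws already stated immediately above the Lemma, namely the relations $R^{\prime}=O_{1}^{t}R$, $S^{\prime}=O_{2}^{t}S$, and $T^{\prime}=O_{1}^{t}TO_{2}$ that follow from LU equivalence via the result of Ref.~\cite{cui2017local}. Since the Lemma contains no new information beyond packaging, I would not re-derive those three identities; I would quote them and then verify the matrix equation by a direct block multiplication.

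First I would set up notation: write $M(\rho)$ and $M(\rho^{\prime})$ in the stated $2\times 2$ block form with scalar $(1,1)$-block, $1\times(N^{2}-1)$ row block $S^{t}$, $(N^{2}-1)\times 1$ column block $R$, and $(N^{2}-1)\times(N^{2}-1)$ block $T$. I would similarly write $\operatorname{diag}\{1,O_{1}^{t}\}$ and $\operatorname{diag}\{1,O_{2}\}$ as block-diagonal matrices of conformable sizes so that the triple product is well defined.

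Next I would carry out the multiplication from left to right. Left multiplication by $\operatorname{diag}\{1,O_{1}^{t}\}$ leaves the top row of $M(\rho)$ untouched and sends the bottom row $(R,T)$ to $(O_{1}^{t}R,\,O_{1}^{t}T)$. Right multiplication by $\operatorname{diag}\{1,O_{2}\}$ then leaves the first column untouched and sends the second block-column to $(S^{t}O_{2},\,O_{1}^{t}TO_{2})^{t}$ in the appropriate slots. The resulting matrix is
\begin{equation}
\begin{pmatrix} 1 & S^{t}O_{2} \\ O_{1}^{t}R & O_{1}^{t}TO_{2} \end{pmatrix}.
\end{equation}
Comparing block-by-block with $M(\rho^{\prime})=\begin{pmatrix} 1 & S^{\prime t} \\ R^{\prime} & T^{\prime} \end{pmatrix}$ reduces the claim to the three identities $R^{\prime}=O_{1}^{t}R$, $S^{\prime t}=S^{t}O_{2}$ (equivalently $S^{\prime}=O_{2}^{t}S$), and $T^{\prime}=O_{1}^{t}TO_{2}$, which are exactly the relations already supplied.

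There is no genuine obstacle here; the only point worth being careful about is the placement of transposes, because the $S$-block sits as a row while the $R$-block sits as a column, so the two orthogonal rotations $O_{1}$ and $O_{2}$ act on opposite sides of the product. I would therefore double-check that the orientation chosen for $\operatorname{diag}\{1,O_{1}^{t}\}$ on the left and $\operatorname{diag}\{1,O_{2}\}$ on the right matches the conventions used for $R^{\prime}=O_{1}^{t}R$ and $S^{\prime}=O_{2}^{t}S$, and note in passing that $O_{1},O_{2}\in SO(N^{2}-1)$ so that $\operatorname{diag}\{1,O_{i}\}\in SO(N^{2})$, keeping the transformation within the special orthogonal group in the enlarged space.
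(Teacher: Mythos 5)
Your proposal is correct and matches the paper's reasoning exactly: the paper also takes the three transformation laws $R^{\prime}=O_{1}^{t}R$, $S^{\prime}=O_{2}^{t}S$, $T^{\prime}=O_{1}^{t}TO_{2}$ from Ref.~\cite{cui2017local} as given and presents the Lemma as their block-matrix repackaging, with no further argument. Your explicit block multiplication and the remark on transpose placement are just a more careful write-up of the same step.
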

From Lemma 1 we obtain the transformation of the quantum state's feature matrix under LU transformations. As the above Lemma is based on the generalized Bloch representations, it will be restricted by the properties of Bloch-vector space \cite{kimura2003bloch}. 
Hence, the special orthogonal matrices $O\in SO(N^{2}-1)$ acting on the coefficient vectors of $\rho$ may transform $\rho$ into an unphysical form that does not satisfy the definition of a density matrix. 
Hence, the Lemma 1 is a necessary but not sufficient condition for the LU equivalence of quantum states when $N>2$. 

To make the conditions for verifying LU equivalence operational, we will find further the LU invariants based on Lemma 1. First, let us focus on matrix $T$ and make a SVD of it. As $T$ is a real matrix, one can obtain a real orthogonal matrix $P$ and $Q$ with respect to the corresponding singular values, such that $T=P\Sigma Q^{t},$ 
where $\Sigma={\rm diag}\{\mu_{1},\mu_{2},...,\mu_{d},0,...,0\}.$ Here, the singular values are in decreasing order, and $d$ is the $rank$ of matrix $T$. To deal with the degenerate singular values, one can divide $\Sigma$ into a block matrix by putting the same singular values in a submatrix, such that $\Sigma={\rm diag}\{\Sigma_{1},\Sigma_{2},...,\Sigma_{n},\Sigma_{n+1}\}$, where $n+1$ means that there are $n+1$ different singular values. The $\Sigma_{m}$ is an $a_{m}\times a_{m}$ matrix 
with $a_{1}+a_{2}+\dots+a_{n+1}=N^{2}-1$. For $m\in\{1,2,\dots,n\}$, there is $\Sigma_{m}\propto I$. When it comes to the case of $m=n+1$, one should consider whether there exists a zero singular value of matrix $T$. If $T$ is a full rank matrix, the $\Sigma_{n+1}$ is also proportional to the identity matrix, otherwise $\Sigma_{n+1}$ is a zero matrix. 
Thus, the matrix $M$ can be rewritten as $M(\rho)={\left[ \begin{array}{cc}
1 & \\
 & P
\end{array}
\right]}{\left[ \begin{array}{cc}
1 & \widetilde{S}^{t}\\
\widetilde{R} & \Sigma
\end{array}
\right]} {\left[ \begin{array}{cc}
1 & \\
 & Q^{t}
\end{array}
\right]},$
where $\widetilde{R}\equiv P^{t}R,(\widetilde{S})^{t}\equiv S^{t}Q$. Let $\pi_{m}(R)$ denote the projection of the $m$-th part of vector $R$ and and $\|*\|$ be the Euclid norm \cite{horn1990matrix}. Here, each part $\pi_{m}(R)$ is an $a_{m}$ dimensional vector, $m=1,2,\cdots,n+1$. 
With these notations, we present our main result as the following theorem.
\begin{theorem}
If two two-qudit states $\rho$ and $\rho^{\prime}$ are LU equivalent, one has the following invariants: 
\begin{equation}
\begin{split}
&(1)\Sigma,{\rm det}(T), {\rm det}(M(\rho)),\\&(2)\|\pi_{m}(\widetilde{R})\|,\|\pi_{m}(\widetilde{S})\|,m=1,2,...,n+1,\\&(3)\pi_{m}(\widetilde{S})^{t}\pi_{m}(\widetilde{R}),m=1,2,...,n^{\prime},
\label{eq:3}
\end{split}
\end{equation}
where $n^{\prime}$ is determined by matrix $T$. If $T$ has no zero singular values, one has $n^{\prime}=n+1$, otherwise, $n^{\prime}=n$.
\end{theorem}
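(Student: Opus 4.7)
The plan is to deduce Theorem~1 directly from Lemma~1, with the main technical work going into the non-uniqueness of the SVD when $T$ has degenerate or zero singular values. By Lemma~1, LU equivalence of $\rho$ and $\rho^{\prime}$ means there exist $O_1,O_2\in SO(N^2-1)$ with $R^{\prime}=O_1^t R$, $S^{\prime}=O_2^t S$, and $T^{\prime}=O_1^t T O_2$, so the theorem reduces to finding functions of $(R,S,T)$ that are unchanged under any such paired left/right orthogonal action.

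Part~(1) is essentially immediate. The identity $T^{\prime}=O_1^t T O_2$ gives $\det(T^{\prime})=\det(T)$ because $O_1,O_2\in SO(N^2-1)$ have unit determinant; the same multiplicativity applied to the block identity for $M(\rho^{\prime})$ in Lemma~1 yields $\det(M(\rho^{\prime}))=\det(M(\rho))$. Finally, singular values are preserved under left and right orthogonal multiplication, so $T$ and $T^{\prime}$ share the same ordered singular-value list and hence the same block-structured $\Sigma$.

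For parts~(2) and~(3) I would argue as follows. Starting from a fixed SVD $T=P\Sigma Q^t$, one legitimate SVD of $T^{\prime}$ is $(P^{\prime},Q^{\prime})=(O_1^t P,O_2^t Q)$, and any other admissible SVD is $(O_1^t P W_P,\,O_2^t Q W_Q)$ where $W_P,W_Q$ are orthogonal and satisfy the stabilizer condition $W_P\Sigma=\Sigma W_Q$. This condition forces $W_P$ and $W_Q$ to be block-diagonal with respect to the degeneracy partition $\Sigma=\operatorname{diag}\{\Sigma_1,\dots,\Sigma_{n+1}\}$; moreover, in every block $\Sigma_m=\mu_m I$ with $\mu_m\ne 0$ it forces $W_{P,m}=W_{Q,m}$, whereas in the (at most one) zero block the factors $W_{P,n+1}$ and $W_{Q,n+1}$ are independent orthogonal matrices. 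A direct substitution then yields $\widetilde{R}^{\prime}=W_P^t\widetilde{R}$ and $\widetilde{S}^{\prime}=W_Q^t\widetilde{S}$, so the blockwise relations $\pi_m(\widetilde{R}^{\prime})=W_{P,m}^t\pi_m(\widetilde{R})$ and $\pi_m(\widetilde{S}^{\prime})=W_{Q,m}^t\pi_m(\widetilde{S})$ give the norm invariants in~(2) for every $m$, while the inner products $\pi_m(\widetilde{S})^t\pi_m(\widetilde{R})$ are preserved precisely when $W_{P,m}=W_{Q,m}$, i.e.\ exactly for $m\le n^{\prime}$ as stated.

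The main obstacle I expect is the bookkeeping around this SVD gauge freedom: the quantities in~(2) and~(3) must be shown independent of the choice of SVD of $T$ itself, not merely well-behaved for the particular SVD induced by $O_1,O_2$. Otherwise the \emph{invariants} would be artifacts of the factorization rather than functions of $\rho$. Consequently, the core of the argument is really a refined SVD uniqueness statement under the block degeneracy of $\Sigma$, and the asymmetry between the full-rank and rank-deficient cases for $T$ in the definition of $n^{\prime}$ enters solely through whether the zero-singular-value block contributes a genuine inner-product invariant or only a norm invariant.
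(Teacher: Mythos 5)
Your proposal is correct and follows essentially the same route as the paper: part (1) via determinant multiplicativity and orthogonal invariance of singular values, and parts (2)--(3) by fixing the induced SVD $(O_1^tP,\,O_2^tQ)$ of $T^{\prime}$ and analyzing the residual gauge freedom, which is block-diagonal with paired left/right factors on nonzero singular-value blocks and independent factors on the zero block. The ``stabilizer condition'' $W_P\Sigma=\Sigma W_Q$ you isolate is exactly the paper's construction of the $\Sigma$-preserving orthogonal matrices $\widetilde{O},\widetilde{\mathcal{O}}$, so no further comparison is needed.
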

\begin{proof}
From the Lemma 1, it is easy to get ${\rm det}(T^{\prime})={\rm det}(T)$ and ${\rm det}(M(\rho^{\prime}))={\rm det}(M(\rho))$, as ${\rm det}(O_{1}^{t})={\rm det}(O_{2})={\rm det}({\rm diag}\{1,O_{1}^{t}\})={\rm det}({\rm diag}\{1,O_{2}\})=1$ with $O_{1},O_{2}\in SO(N^{2}-1)$. Similarly, one can perform a SVD of matrices $T^{\prime}$ with $T^{\prime}=P^{\prime}\Sigma^{\prime} Q^{\prime t}$. 
If $\rho$ and $\rho^{\prime}$ are LU equivalent, from Lemma 1 one has $T^{\prime}=O_{1}^{t}P\Sigma Q^{t}O_{2}$. 
As $O_{1}^{t}P$ and $Q^{t}O_{2}$ are orthogonal matrices, one has $\Sigma^{\prime}=\Sigma$. To prove other invariants, the degeneracy of matrix $T$ should be considered. If matrix $T$ is degenerate, $P$ and $Q$ are not unique. 
Firstly, from Lemma 1, one can fix $P^{\prime},Q^{\prime}$ as $\ P^{\prime}=O_1^{t}P,\ Q^{\prime}=O_{2}^{t}Q.$ 
By these relations, one has
\begin{equation}
\widetilde{R}^{\prime}=P^{\prime t}R^{\prime}=P^{t}O_{1}O_{1}^{t}R=\widetilde{R},\ \widetilde{S}^{\prime t}=S^{\prime t}Q^{\prime}=\widetilde{S}^{t}.
\label{eq:relation-RS}
\end{equation}
Next, let us discuss the general case 
when $T$ has zero singular values. 
For $m=1,2,...,n$, as $\Sigma_{m}\propto I$, the $\Sigma_{m}$-preserving operations are orthogonal matrices $G_{m}$ with $G_{m}\Sigma_{m}G_{m}^{t}=\Sigma_{m}$. For $m=n+1$, the orthogonal matrices $G_{n+1}$, 
$\mathcal{G}_{n+1}$ are $\Sigma_{m}$-preserving operations due to $G_{n+1}\Sigma_{n+1}\mathcal{G}^{t}_{n+1}=\Sigma_{n+1}$. Here, there is no need for the orthogonal matrices $G_{n+1}$ and $\mathcal{G}_{n+1}$ to be the same, as $\Sigma_{n+1}$ is the zero matrix. Thus, one can construct $\widetilde{O}={\rm diag}\{G_{1},G_{2},...,G_{n},G_{n+1}\}$ and $\widetilde{\mathcal{O}}={\rm diag}\{G_{1},G_{2},...,G_{n},\mathcal{G}_{n+1}\}$, such that $\widetilde{O}\Sigma\widetilde{\mathcal{O}}^{t}=\Sigma$, where $\widetilde{O},\ \widetilde{\mathcal{O}}$ are orthogonal matrices.
For a general SVD of matrix $T$, the feature matrix can be written as 
\begin{equation}
M(\rho)={\left[ \begin{array}{cc}
1 & \\
 & P\widetilde{O}
\end{array}
\right]}{\left[ \begin{array}{cc}
1 & \widetilde{S}^{\S t}\\
\widetilde{R}^{\S} & \Sigma
\end{array}
\right]} {\left[ \begin{array}{cc}
1 & \\
 & \widetilde{\mathcal{O}}^{t}Q^{t}
\end{array}
\right]},
\end{equation}
where $\widetilde{R}^{\S}\equiv \widetilde{O}^{t}\widetilde{R},\ \widetilde{S}^{\S t}\equiv \widetilde{S}^{t}\widetilde{\mathcal{O}}$. Similarly, for state $\rho^{\prime}$, one has $\widetilde{R}^{\prime\S}\equiv \widetilde{O}^{\prime t}\widetilde{R}^{\prime},\ \widetilde{S}^{\prime\S t}\equiv \widetilde{S}^{\prime t}\widetilde{\mathcal{O}^{\prime}}$, with orthogonal matrices $\widetilde{O^{\prime}}={\rm diag}\{G^{\prime}_{1},G^{\prime}_{2},...,G^{\prime}_{n},G^{\prime}_{n+1}\}$ and $\widetilde{\mathcal{O}^{\prime}}={\rm diag}\{G^{\prime}_{1},G^{\prime}_{2},...,G^{\prime}_{n},\mathcal{G}^{\prime}_{n+1}\}$. 
From Eq. (\ref{eq:relation-RS}), one has $\pi_{m}(\widetilde{R}^{\prime})=\pi_{m}(\widetilde{R}),\
\pi_{m}(\widetilde{S}^{\prime})=\pi_{m}(\widetilde{S}).$ 
Then, from the definition of $\widetilde{R}^{\S},\widetilde{S}^{\S}$, 
it is easy to check that $\|\pi_{m}(\widetilde{R}^{\S})\|= \|\pi_{m}(\widetilde{R}^{\prime\S})\|,\|\pi_{m}(\widetilde{S}^{\S})\|=\|\pi_{m}(\widetilde{S}^{\prime\S})\|$, 
where $m=1,2,\dots,n+1$. Furthermore, for $m=1,2,\dots,n$, one can also obtain $\pi_{m}(\widetilde{S}^{\S})^{t}\pi_{m}(\widetilde{R}^{\S})=\pi_{m}(\widetilde{S}^{\prime\S})^{t}\pi_{m}(\widetilde{R}^{\prime\S})$ by direct calculation. 
Moreover, if $T$ has no zero singular values, the $\Sigma_{n+1}$ is not a zero matrix. In this case, one has $\mathcal{G}_{n+1}=G_{n+1}$ and then $\pi_{n+1}(\widetilde{S}^{\S})^{t}\pi_{n+1}(\widetilde{R}^{\S})=\pi_{n+1}(\widetilde{S}^{\prime\S})^{t}\pi_{n+1}(\widetilde{R}^{\prime\S})$. 
Without loss of generality, we can erase the notation $\S$ and summarize the results in Theorem 1.
\end{proof} 
From the Theorem 1, one can verify that two two-qudit states are not LU equivalent, if the invariants shown in the Eq.$~$(\ref{eq:3}) are not equal for two states. Compared with the existing results, the invariants in the Theorem 1 covers the one of Ref. \cite{cui2017local} and have advantages in verifying the LU equivalence of high-dimensional mixed states. 

\begin{example}
By introducing certain noise into a nonmaximally entangled two-qutrit state $|\psi\rangle=\frac{\sqrt{2}}{4}|00\rangle+\frac{\sqrt{2}}{4}|11\rangle+\frac{\sqrt{3}}{2}|22\rangle$, one has ${\rho}=q|\psi\rangle\langle\psi|+\frac{1-q}{6}(|01\rangle\langle 01|+|10\rangle\langle 10|+|02\rangle\langle 02|+|20\rangle\langle 20|+|12\rangle\langle 12|+|21\rangle\langle 21|)$. Assuming $q=\frac{4}{17}$, the coefficients of Bloch representation can be obtained as $R=S=(0,0,0,0,0,0,0,-5\sqrt{3}/306)$ and 
\begin{equation}\nonumber
T=
\begin{bmatrix}
\frac{1}{68} & 0&0  &0 &0 &0 & 0&0 \\
 0&\frac{\sqrt{6}}{68}& 0& 0&0 &0 & 0&0\\
 0& 0 & -\frac{1}{68} &0 &0 &0 &0 &0 \\
 0& 0& 0&-\frac{5}{102}& 0&0 &0 & 0\\
  0&0 & 0& 0&\frac{\sqrt{6}}{68}& 0& 0&0 \\
    0&0 &0 &0 & 0& -\frac{\sqrt{6}}{68}&0 & 0\\
       0 &0 & 0&0 &0 &0 & -\frac{\sqrt{6}}{68}& 0 \\
           0 & 0& 0&0 &0 & 0& 0& 0\\
\end{bmatrix}.
\end{equation}
Now, one can construct another states ${\rho}^{\prime}$ with coefficients $R^{\prime}=R,\ S^{\prime}=S,\ T^{\prime}=-T$. 
\end{example}
In this case, as the LU invariants in Ref.$~$\cite{cui2017local} for the above two states are all the same, one cannot verify whether or not the two states $\rho$ and $\rho^{\prime}$ are LU equivalent. However, by our method, one can check that 
${\rm det}(M(\rho))\neq {\rm det}(M(\rho^{\prime}))$, 
which means $\rho$ and $\rho^{\prime}$ have different invariants. Hence, $\rho$ and $\rho^{\prime}$ are not LU equivalent. Our results cover the invariants in Ref.$~$\cite{cui2017local} and we have an extra invariant ${\rm det}(M(\rho))$. The invariants derived in Ref. \cite{cui2017local} are based on the orbit of group, where it cannot naturally lead to the discovery of an invariant such as ${\rm det}(M(\rho))$. However, our construction of invariants is based on a feature matrix, which binds together the coefficient vectors $R$, $S$ and coefficient matrix $T$. Moreover, by performing a SVD of matrix $T$, we explore the fine-grained relation between vectors $R$ and $S$ according to singular values of $T$. In most of the existing works, they study the coefficient vectors and coefficient matrix in isolation or bind them by the orbit of the group. The construction of the feature matrix brings different points of view to explore the LU invariants, which can give a more refined characterization of LU equivalence.


By the above discussion, our criteria for verifying the LU equivalence of arbitrary dimensional mixed states are superior, as they can be applied to more scenarios.

\section{three-qudit case}\label{sec:multi}
To show that the method can be applied to a multipartite case, we firstly consider the three-qudit system. Then, with a simple extension, the generalization to the general multipartite case is immediate.
The generalized Bloch representation of a density matrix for a three-qudit state can be written as 
\begin{equation}
\begin{split}
\rho=&\frac{1}{N^{3}}I^{1}I^{2}I^{3}+\sum_{i}T_{i}^{1}\lambda_{i}^{1}I^{2}I^{3}+\sum_{j}T_{j}^{2}I^{1}\lambda_{j}^{2}I^{3}\\&+\sum_{k}T_{k}^{3}I^{1}I^{2}\lambda_{k}^{3}+\sum_{ij}T_{ij}^{12}\lambda_{i}^{1}\lambda_{j}^{2}I^{3}+\sum_{ik}T_{ik}^{13}\lambda_{i}^{1}I^{2}\lambda_{k}^{3}\\&+\sum_{jk}T_{jk}^{23} I^{1}\lambda_{j}^{2}\lambda_{k}^{3}+\sum_{ijk}T_{ijk}^{123} \lambda_{i}^{1}\lambda_{j}^{2}\lambda_{k}^{3},
\label{eq:rho3}
\end{split}
\end{equation}
where $\lambda_{i}^{1},\lambda_{j}^{2},\lambda_{k}^{3}$ are the generators of $SU(N)$ with $i,j,k=1,2,\dots,N^{2}-1$. Here, the $T_{1}=\{T_{i}^{1}\},\ T_{2},\ T_{3}$ are three $N^{2}-1$ dimensional coefficient vectors, the $T_{12}=\{T_{ij}^{12}\},\ T_{13},\ T_{23}$ are three $(N^{2}-1)\times(N^{2}-1)$ dimensional coefficient matrices and $T_{123}=\{T_{ijk}^{123}\}$ is a coefficient tensor. Here, one can write $I\otimes I\otimes I$ as $I^{1}I^{2}I^{3}$ by omitting the notation $\otimes$. Some results in the three-qudit system are similar to the two-qudit case. If two mixed states $\rho$ and $\rho^{\prime}$ are LU equivalent, there are some $O_{1},O_{2},O_{3}\in SO(N^{2}-1)$, such that
\begin{equation}
    \begin{aligned}
        {T}_{1}^{\prime}&={O}_{1}^{t}{T}_{1},\ {T}_{2}^{\prime}={O}_{2}^{t}{T}_{2},\ {T}_{3}^{\prime}={O}_{3}^{t}{T}_{3},\\
        {T}_{12}^{\prime}&={O}_{1}^{t}{T}_{12}{O}_{2},\ {T}_{13}^{\prime}={O}_{1}^{t}{T}_{13}{O}_{3},\ {T}_{23}^{\prime}={O}_{2}^{t}{T}_{23}{O}_{3},\\
            {T}_{123}^{\prime}&=( {O}_{1}^{t}\otimes    {O}_{2}^{t}\otimes    {O}_{3}^{t})  {T}_{123}.
    \end{aligned}   \label{eq:3qudit}
\end{equation}
Now, let the elements in $T_{123}$ be arranged as $(N^{2}-1)\times(N^{2}-1)^{2}$ matrices $T_{1|23},T_{2|13}$, and $T_{3|12}$, whose details are presented in the Appendix. 
For example, 
\begin{equation}
T_{1|23}=
\begin{bmatrix}
T_{111} & T_{112}& \cdots & T_{1(N^{2}-1)(N^{2}-1)}\\
T_{211} &T_{212}& \cdots & T_{2(N^{2}-1)(N^{2}-1)}\\
\vdots & \vdots & \ddots & \vdots\\
T_{(N^{2}-1)11} & T_{(N^{2}-1)12}& \cdots & T_{(N^{2}-1)(N^{2}-1)(N^{2}-1)}
\end{bmatrix},
\end{equation}
where we let the first subscript be the row index. From Eq. (\ref{eq:3qudit}), one has 
\begin{equation}
    {T}_{\alpha|\beta\gamma}^{\prime}={O}_{\alpha}^{t}{T}_{\alpha|\beta\gamma}{O}_{\beta}\otimes {O}_{\gamma},\ {T}_{\beta\gamma}^{\prime}={O}_{\beta}^{t} {T}_{\beta\gamma}{O}_{\gamma},
    \label{eq:1|23}
\end{equation}
where $(\alpha,\beta,\gamma)=(1,2,3),(2,1,3),(3,1,2)$. For tripartite case, the coefficient vectors and matrices can not be put into a single matrix in a meaningful way. Nevertheless, one can still define some feature matrices as the $ {M}_{\beta\gamma}={\left[ \begin{array}{cc}
            1 & {T}_{\gamma}^{t}\\
            {T}_{\beta} & {T}_{\beta\gamma}
        \end{array}
        \right]}$ and $ {M}_{\alpha|\beta\gamma}={\left[ \begin{array}{cc}
            1 & {T}_{\beta}^{t}\otimes {T}_{\gamma}^{t}\\
            {T}_{\alpha} & {T}_{\alpha|\beta\gamma}
        \end{array}
        \right]}.$ 
Moreover, letting $T_{\beta\gamma}$ be a column vector $vec(T_{\beta\gamma})$, which is arranged by all columns of the matrix $T_{\beta\gamma}$, one can construct ${\hat{M}}_{\alpha|\beta\gamma}={\left[ \begin{array}{cc}
            1 & {\rm vec}({T}_{\beta\gamma})^{t}\\
            {T}_{\alpha} & {T}_{\alpha|\beta\gamma}
        \end{array}
        \right]}$
which has the relation ${\rm vec}({T}_{\beta\gamma}^{\prime})={O}_{\beta}^{t}\otimes {O}_{\gamma}^{t}{\rm vec}({T}_{\beta\gamma})$ 
from Eq. (\ref{eq:1|23}). By performing a SVD of matrices $T_{\beta\gamma},T_{\alpha|\beta\gamma}$,  there are orthogonal matrices $P_{\beta\gamma},Q_{\beta\gamma},P_{\alpha|\beta\gamma}$, and $Q_{\alpha|\beta\gamma}$ such that 
\begin{equation}
    {T}_{\beta\gamma}={P}_{\beta\gamma}{\Sigma}_{\beta\gamma}{Q}_{\beta\gamma}^{t},{T}_{\alpha|\beta\gamma}={P}_{\alpha|\beta\gamma}{\Sigma}_{\alpha|\beta\gamma}{Q}_{\alpha|\beta\gamma}^{t},
    \label{eq:3-svd}
\end{equation}
where the singular values in $\Sigma_{\beta\gamma}$ and $\Sigma_{\alpha|\beta\gamma}$ are all in decreasing order. Similar to the two-qudit case, one can divide the singular-value matrix $\Sigma_{\beta\gamma}$ ($\Sigma_{\alpha|\beta\gamma}$) into a block matrix with $n_{\beta\gamma}$ ($n_{\alpha|\beta\gamma}$) submatrices, by putting the same singular value into a single submatrix. 
Define vectors $u_{1}=P_{\beta\gamma}^{t}T_{\beta},v_{1}=Q_{\beta\gamma}^{t}T_{\gamma}, u_{2}=P_{\alpha|\beta\gamma}^{t}T_{\alpha}, v_{2}=Q_{\alpha|\beta\gamma}^{t}(T_{\beta}\otimes T_{\gamma}),u_{3}=u_{2},v_{3}=Q_{\alpha|\beta\gamma}^{t}vec(T_{\beta\gamma})$ and $n_{1}=n_{\beta\gamma},n_{2}=n_{3}=n_{\alpha|\beta\gamma}$. 
From Eqs. (\ref{eq:1|23},\ref{eq:3-svd}), one has the following theorem.
\begin{theorem}
If two three-qudit states $\rho$ and $\rho^{\prime}$ are LU equivalent, one has the following invariants: 
\begin{gather}
\begin{split}
&(1)\Sigma_{\beta\gamma},\Sigma_{\alpha|\beta\gamma}, {\rm det}(T_{\beta\gamma}), {\rm det}(M_{\beta\gamma}), \notag\\
&(2)\|\pi_{m}(u_{s})\|,\|\pi_{m}(v_{s})\|,m=1,2,\dots,n_{s},\notag\\
&(3)(\pi_{m}(u_{s}))^{t}(\pi_{m}(v_{s})), m=1,2,\dots,n_{s}^{\prime},\notag
\end{split}
\end{gather}
where $s\in\{1,2,3\}$, $(\alpha,\beta,\gamma)=(1,2,3),(2,1,3),(3,1,2)$. In addition, $n_{1}^{\prime}$ is determined by $T_{\beta\gamma}$ and $n_{2}^{\prime}=n_{3}^{\prime}$ are determined by $T_{\alpha|\beta\gamma}$. If $T_{\beta\gamma}$ has no zero singular values, one has $n_{1}^{\prime}=n_{1}$, otherwise, $n_{1}^{\prime}=n_{1}-1$. Similar definitions are applied to $n_{2}^{\prime}$ and $n_{3}^{\prime}$.
\end{theorem}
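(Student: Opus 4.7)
The plan is to reduce Theorem~2 to three parallel applications of the proof strategy used for Theorem~1. The key observation is that every transformation rule in Eq.~(\ref{eq:1|23}) can be recast in the generic form $X'=A^{t}XB$ with $A,B$ orthogonal, so each piece admits the same SVD-based analysis. I will fix an ordered triple $(\alpha,\beta,\gamma)$ at the outset and handle the three substructures indexed by $s$ separately.

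For $s=1$ the triple $(T_{\beta},T_{\gamma},T_{\beta\gamma})$ transforms exactly as $(R,S,T)$ did in the bipartite setting, with $O_{\beta}$ and $O_{\gamma}$ in the roles of $O_{1}$ and $O_{2}$. The feature matrix $M_{\beta\gamma}$ then satisfies the direct analogue of Lemma~1, and transcribing the argument of Theorem~1 immediately yields the invariance of $\Sigma_{\beta\gamma}$, $\det(T_{\beta\gamma})$, $\det(M_{\beta\gamma})$, the partial norms $\|\pi_{m}(u_{1})\|$ and $\|\pi_{m}(v_{1})\|$, and the inner products $\pi_{m}(u_{1})^{t}\pi_{m}(v_{1})$ restricted to blocks of non-zero singular values.

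For $s=2$ I would apply the same machinery to $T_{\alpha|\beta\gamma}$, viewed as an $(N^{2}-1)\times(N^{2}-1)^{2}$ matrix. From Eq.~(\ref{eq:1|23}) one has $T'_{\alpha|\beta\gamma}=O_{\alpha}^{t}T_{\alpha|\beta\gamma}(O_{\beta}\otimes O_{\gamma})$, and since $O_{\beta}\otimes O_{\gamma}$ is itself orthogonal, this too is of the form ``orthogonal $\times$ matrix $\times$ orthogonal''. Pairing $T_{\alpha}$ (which transforms as $O_{\alpha}^{t}T_{\alpha}$) with $T_{\beta}\otimes T_{\gamma}$ (which transforms as $(O_{\beta}^{t}\otimes O_{\gamma}^{t})(T_{\beta}\otimes T_{\gamma})$), the feature matrix $M_{\alpha|\beta\gamma}$ again fits the pattern of Lemma~1, so the SVD and block-diagonal arguments carry through verbatim. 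The case $s=3$ proceeds identically after substituting $\mathrm{vec}(T_{\beta\gamma})$ for $T_{\beta}\otimes T_{\gamma}$ on the right-hand side; the rule $\mathrm{vec}(T'_{\beta\gamma})=(O_{\beta}^{t}\otimes O_{\gamma}^{t})\mathrm{vec}(T_{\beta\gamma})$ quoted in the excerpt is the input, and because the left-hand factor $T_{\alpha}$ and the central matrix $T_{\alpha|\beta\gamma}$ are the same as for $s=2$ one obtains $u_{3}=u_{2}$ and $n_{3}=n_{2}$ automatically, while $v_{3}$ supplies a genuinely new invariant.

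The main obstacle, exactly as in Theorem~1, will be controlling the non-uniqueness of the SVD in the presence of degenerate singular values. For each block of equal non-zero singular values the allowed orthogonal ambiguity is a single matrix $G_{m}$ satisfying $G_{m}\Sigma_{m}G_{m}^{t}=\Sigma_{m}$, whereas for the trailing zero block (when it exists) the left and right ambiguities decouple into independent orthogonal matrices $G_{n+1},\mathcal{G}_{n+1}$. I will reuse the $\widetilde{O},\widetilde{\mathcal{O}}$ construction from the proof of Theorem~1 to show that the partial norms on every block and the inner products on every non-zero block are unaffected by this freedom; the zero block must then be excluded from the inner-product list, which is exactly the source of the prescription $n_{s}'=n_{s}-1$ whenever a zero singular value is present in the relevant $\Sigma$. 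The one stylistic subtlety compared with Theorem~1 is that for $s=2,3$ the right-hand orthogonal factor is a tensor product $O_{\beta}\otimes O_{\gamma}$ rather than an unconstrained one, but since the argument only uses orthogonality the tensor structure never needs to be exploited.
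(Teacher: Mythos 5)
Your proposal follows essentially the same route as the paper, which simply reduces Theorem~2 to the argument of Theorem~1 via the feature matrices $M_{\beta\gamma}$, $M_{\alpha|\beta\gamma}$, $\hat{M}_{\alpha|\beta\gamma}$ and the fact that each transformation rule in Eq.~(\ref{eq:1|23}) has the form ``orthogonal $\times$ matrix $\times$ orthogonal''; your treatment of the three cases $s=1,2,3$ and of the SVD degeneracy matches the paper's intent and in fact spells out details the paper leaves implicit.
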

\begin{proof}
The proof of Theorem 2 is similar to the one of Theorem 1, as the LU invariants are obtained from the feature matrices. 
\end{proof}
Theorem 2 leads to more invariants than existing results, e.g. Ref. \cite{cui2017local}. Therefore, it can be adopted to verify the LU equivalence of the states that cannot have been verified before. Moreover, the result presents a general approach to derive the necessary criteria of LU equivalence for the multipartite qudit case, based on the coefficient vectors and feature matrices constructed from the generalized Bloch representation. 
Although, for simplicity, we discussed the qudit case with the same dimensions, cases with different dimensions can be studied similarly. Therefore, this method can always be used to derive the necessary conditions for the LU equivalence of arbitrary-dimensional multipartite mixed quantum states. 

\section{Invariants and entanglement}\label{sec:entanglement}
Entanglement is one of the most important quantum resources, which is generally applied into various quantum information processing tasks. However, the complete characterization of entanglement is extremely challenging. It is known that a complete set of invariants under LU transformation can be adopted to depict the properties of entanglement. For pure states, the constructed invariants can be used to derive the concurrence. Here, we show how our invariants lead to the estimation of concurrence.

For simplicity, let us consider a two-qudit pure state $\rho$. It is known that the concurrence of a pure bipartite state is given by $C_{N}(\rho)=\sqrt{(N/(N-1))(1-Tr(\rho_{1}^{2}))}$ \cite{con2002,PhysRevA.88.062116}, where $\rho_{1}$ is the partial trace of state $\rho$ by tracing out the subsystem ${\cal H}_{2}$. By combining the Eq.$~$(\ref{eq:rho}), one has $1-C_{N}^{2}(\rho)=\frac{2N^{3}}{N-1}\|R\|^{2}$. From the proof of Theorem 1, one can obtain that $\|R\|=\|\widetilde{R}\|=\sqrt{\sum_{m}\|\pi_{m}(\widetilde{R})\|^{2}}$. As $\rho$ is a pure state, we have $\|R\|=\|S\|=\|\widetilde{S}\|=\sqrt{\sum_{m}\|\pi_{m}(\widetilde{S})\|^{2}}$. Therefore, one has 
\begin{equation}\nonumber
\sum_{m}\|\pi_{m}(\widetilde{R})\|^{2}=\sum_{m}\|\pi_{m}(\widetilde{S})\|^{2}=\frac{N-1}{2N^{3}}(1-C_{N}^{2}(\rho)). 
\end{equation}
By adopting the invariants $\|\pi_{m}(\widetilde{R})\|,\|\pi_{m}(\widetilde{S})\|$, one can measure the entanglement of any pure two-qudit states. 
Therefore, the result simply reveals that the concurrence can be expressed by our LU invariants in the scenario of bipartite pure states. The close relation between LU invariants and entanglement is naturally constructed. Such an intimately deep connection will promote the research of entanglement measure for mixed states based on LU invariants.

\section{Conclusion}\label{sec:conclusion}
In this paper, we have presented the necessary criteria for the local unitary equivalence of arbitrary dimensional multipartite mixed quantum states. By using the criteria, one can verify the local unitary equivalence of states with density matrices even having degenerate eigenvalues. 
It can be easily extended to the multipartite case, which gives rise to the necessary criteria for the local unitary equivalence of multipartite states based on feature matrices. In addition, the relation between the local unitary invariants and entanglement, is shown with one example of pure bipartite states. 
In a similar way, one can find other relations between local unitary invariants and entanglement measure for any pure multipartite states. 
We hope that a full characterization of the entanglement properties for mixed states can be motivated from the constructed invariants.

In the future, constructing more local unitary invariants to obtain sufficient and necessary conditions for verifying the local unitary equivalence of arbitrary-dimensional multipartite mixed quantum states is interesting and challenging. Meanwhile, more local unitary invariants can promote the exploration for characterizing entanglement properties. 

\section*{Acknowledgments}

This work has been supported by the National Natural Science Foundation of China (Grants No. 62375252, No. 62031024, No. 11874346, No. 12174375, No. 12005091, No. 12075159, No. 12171044), the National Key R$\& $D Program of China (2019YFA0308700), the Anhui Initiative in Quantum Information Technologies (AHY060200), the Innovation Program for Quantum Science and Technology (No. 2021ZD0301100), the Academician Innovation Platform of Hainan Province, the Research Startup Foundation of Hangzhou Dianzi University (No. KYS275623071) and Zhejiang Provincial Natural Science Foundation of China under Grant No. LQ24A050005.

\section*{Appendix A: details for matrices $T_{2|13},T_{3|12}$}\label{sec:appen2}
\setcounter{equation}{0}
\renewcommand\theequation{A\arabic{equation}}
Here, the detailed forms of matrices $T_{2|13},T_{3|12}$ are presented as following
\begin{equation}
T_{2|13}=
\begin{bmatrix}
T_{111} & T_{112}& \cdots & T_{(N^{2}-1)1(N^{2}-1)}\\
T_{121} &T_{122}& \cdots & T_{(N^{2}-1)2(N^{2}-1)}\\
\vdots & \vdots & \ddots & \vdots\\
T_{1(N^{2}-1)1} & T_{1(N^{2}-1)2}& \cdots & T_{(N^{2}-1)(N^{2}-1)(N^{2}-1)}
\end{bmatrix},
\end{equation}
\begin{equation}
T_{3|12}=
\begin{bmatrix}
T_{111} & T_{121}& \cdots & T_{(N^{2}-1)(N^{2}-1)1}\\
T_{112} &T_{122}& \cdots & T_{(N^{2}-1)(N^{2}-1)2}\\
\vdots & \vdots & \ddots & \vdots\\
T_{11(N^{2}-1)} & T_{12(N^{2}-1)}& \cdots & T_{(N^{2}-1)(N^{2}-1)(N^{2}-1)}
\end{bmatrix}.
\end{equation}
For any multipartite case, the coefficient matrix $T_{\alpha|\beta\gamma\dots}$ can be arranged in a similar way, where the $\alpha$-th subscript is used as the row index.

%

\end{document}